\def\be{\begin{equation}}
\def\ee{\end{equation}}
\def\ba{\begin{array}{c}}
\def\ea{\end{array}}
\newcommand{\pkt}{\!\!\succ\,\,}
\newcommand{\kt}{\rangle}
\newcommand{\br}{\langle}
\newtheorem{thm}{Theorem}
\newtheorem{lemma}[thm]{Lemma}
\newenvironment{proof}{\noindent
 {\bf Proof.}}{\hfill$\square$\vspace{3mm}\endtrivlist}
\begin{document}

%.
%
%%%\newpage

\begin{center}

{\Large

Perturbation theory near degenerate exceptional points

 }

\vspace{0.8cm}

  {\bf Miloslav Znojil}

\vspace{0.2cm}

\vspace{0.2cm}

\vspace{1mm} The Czech Academy of Sciences, Nuclear Physics
Institute,

Hlavn\'{\i} 130, 25068 \v{R}e\v{z}, Czech Republic,

\vspace{0.2cm}

%% and
%
%
%\vspace{0.2cm}
%
%
%
%Institute of System Science, Durban University of Technology,
%
%P. O.
%Box 1334, Durban 4000, South Africa
%
%
%\vspace{0.2cm}

 and

\vspace{0.2cm}

Department of Physics, Faculty of Science, University of Hradec
Kr\'{a}lov\'{e},

Rokitansk\'{e}ho 62, 50003 Hradec Kr\'{a}lov\'{e},
 Czech Republic

%\vspace{0.2cm}
%
% The Czech Academy of Sciences, Nuclear Physics
%Institute,
%
%Hlavn\'{\i} 130, 250 68 \v{R}e\v{z}, Czech Republic

\vspace{0.2cm}

 {e-mail: znojil@ujf.cas.cz}

\vspace{0.3cm}

%\vspace{0.8cm}

\end{center}

\newpage

\section*{Abstract}

%${\cal PT}-$symmetric degeneracies and anisotropic perturbations

In an overall framework of 
quantum mechanics of unitary systems
a rather sophisticated new version of perturbation
theory is developed and described.
The 
motivation of such an extension of the list of the currently available
perturbation-approximation recipes was four-fold:
(1) its need results from 
the quick growth of interest in 
quantum systems exhibiting 
parity-time symmetry (${\cal PT}-$symmetry)
and its generalizations;
(2) in the context of physics, 
the necessity of a thorough update of perturbation theory
became clear immediately after the identification of
a class of quantum phase transitions with the non-Hermitian 
spectral degeneracies 
at the Kato's
exceptional points (EP); 
(3) in the dedicated
literature, the EPs are only being
studied in the special
scenarios characterized by the spectral
geometric multiplicity $L$ equal to one;
(4) apparently, one of the decisive reasons may be seen in the 
complicated nature of mathematics behind the 
$L\geq 2$ constructions.
In our present paper we show how to overcome the latter,
purely technical obstacle. The temporarily
forgotten class of the $L>1$ models
is shown accessible to a feasible perturbation-approximation
analysis. In particular, an emergence of a counterintuitive
connection between the value of $L$, the structure of 
the matrix elements of perturbations, and
the possible loss of the stability and unitarity of the processes 
of the unfolding
of the singularities is given a detailed explanation.

\subsection*{Keywords}
.

non-Hermitian quantum dynamics;

unitary vicinity of exceptional points;

degenerate perturbation theory;

Hilbert-space geometry near EPs;

 \newpage

\section{Introduction}

The Bender's and Boettcher's \cite{BB} idea of replacement of
Hermiticity $H = H^\dagger$ by parity-time symmetry (${\cal
PT}-$symmetry) $H{\cal PT}={\cal PT}H$ of a Hamiltonian responsible
for unitary evolution opened, after an appropriate mathematical
completion \cite{Carl,ali,book,Carlb} of the theory, a way towards the
building of quantum models exhibiting non-Hermitian degeneracies
\cite{Berry} {\it alias\,} exceptional points (EPs, \cite{Kato}).
For a fairly realistic illustrative example
of possible applications opening multiple new horizons
in phenomenology we could recall,
e.g., the well known phenomenon of Bose-Einstein condensation
is a schematic simplification described by
the non-Hermitian but ${\cal PT}-$symmetric
three-parametric Hamiltonian
 \begin{equation}
 \label{Ham1}
  H^{(GGKN)} (\gamma,v,c)= -{\rm i} \gamma
  \left(a_1^{\dagger}a_1 - a_2^{\dagger}a_2\right) +
  v\left(a_1^{\dagger}a_2 + a_2^{\dagger}a_1\right) +
  %c\,H_{int}\,,
%  \ \ \ \
%  H_{int}=
  \frac{c}{2}
  \left( a_1^{\dagger}a_1 - a_2^{\dagger}a_2\right)^2\,.
 \end{equation}
This Hamiltonian represents an interesting analytic-continuation
modification of the
conventional Hermitian
Bose-Hubbard Hamiltonian \cite{[38],[37],zaUwem}.
In this form the model
was recently paid detailed attention in
Ref.~\cite{Uwe}.
A consequent application of multiple, often fairly sophisticated forms of
perturbation theory has been shown there to lead to surprising results.
In particular,
the behavior of the bound and resonant states of the
system was found to lead to the new and unexpected
phenomena
in the dynamical regime characterized by the small
coupling constant $c$.
The authors of Ref.~\cite{Uwe} emphasized that new physics may be
expected to emerge precisely
in the vicinity of the EP-related dynamical singularities.

These phenomena (simulating not necessarily just the
Bose-Einstein condensation of course) were analyzed, in \cite{Uwe}, using
several {\it ad hoc}, not entirely standard perturbation techniques.
The
role of an unperturbed Hamiltonian $H_0$ was assigned, typically, to
the extreme EP limits of $H$. Unfortunately, only too often
the perturbed energies appeared to be complex as a consequence.
In other words, the systems
exhibiting ${\cal
PT}-$symmetry
seemed to
favor the spontaneous breakdown of this symmetry near EPs.

In the light of similar results one immediately has to ask the
question whether such a ``wild behavior''  of the EP-related quantum systems
is generic.
Indeed, an affirmative answer is often encountered in
the studies by mathematicians (see, e.g., \cite{Viola}). A hidden reason is that
they usually tacitly keep in mind just the
``effective theory'' and/or the so called
``open quantum system'' dynamical scenario \cite{Nimrod}.

In the more restrictive context of the unitary quantum mechanics
the situation is different: the ``wild behavior''
of systems is usually not generic there
(cf. also the recent explanatory commentary
on the sources of possible misunderstandings in \cite{Ruzicka}).
Several
non-numerical illustrative models may be found in \cite{BHAO}
where, typically, the ${\cal PT}-$symmetric
Bose-Hubbard model of Eq.~(\ref{Ham1})
(which behaves as unstable near its
EP singularities \cite{Uwe}))
has been replaced by its ``softly perturbed'' alternative
in which, in an arbitrarily small vicinity of
its EP singularity,  the system remains stable and unitary
under admissible perturbations.

The physics of stability covered by
paper \cite{BHAO} can be perceived as
one of the main sources of inspiration
of our present study. We intend
to replace here the very specific
model of Eq.~(\ref{Ham1})
(in which the geometric multiplicity $L$
of all of its
EP-related degeneracies
was always equal to one)
by a broader class of quantum systems.
In a way motivated by the idea of a highly desirable extension
of the
currently available menu of
the tractable and eligible dynamical scenarios
beyond their $L=1$ subclass,
we will turn attention here
to the
EP-related degeneracies
of the larger,
nontrivial geometric multiplicities $L\geq 2$.
We will reveal that such a study opens new horizons
not only in phenomenology
(where the influence of perturbations becomes
strongly dependent on the
detailed structure of the non-Hermitian degeneracy)
but also in mathematics (where a rich menu of
physical consequences will be shown reflected
by an unexpected
adaptability of the geometry of the Hilbert space
to the detailed structure of the perturbation).

%\newpage, {\it a.k.a.} of maximal order.

The presentation of our results will be organized as follows. 
Firstly, in section~\ref{seaone} we will recall a typical
quantum system (viz., a version of the non-Hermitian Bose-Hubbard 
multi-bosonic model) 
in which the EP degeneracies play a decisive phenomenological role.
We will explain that although the model itself only exhibits the 
maximal-order $L=1$
EP degeneracies, such a option represents, from the purely formal point of view, 
just one of the eligible dynamical scenarios. In Appendix A 
a full classification of the EPs is presented therefore, showing, i.a., that
the number of the ``anomalous'' EPs 
of our present interest with $L \geq 2$ exhibits an almost exponential growth
at the larger matrix dimension $N$.

The goals of our considerations are subsequently explained in section
\ref{seatwo}. For the sake of brevity, we just pick up
the first nontrivial case with $L=2$, and we emphasize that 
even in such a case the basic features of 
an appropriate adaptation of perturbation theory may be explained,
exhibiting also, not quite expectedly,
a survival of the fairly user-friendly mathematical structure.

In order to make our message self-contained,
the known form of the EP-related perturbation formalism 
restricted to $L=1$ is reviewed
in Appendix B. On this background, 
in a way based on a not too dissimilar constructive
strategy, our present main $L \geq 2$ results are then presented 
and described in section
\ref{seathree}. 
We emphasize there the existence of
the phenomenological as well as mathematical subtleties of the large$-L$ models.
We show that 
in our generalized, degenerate-perturbation-theory formalism
a key role is played by
an interplay between its formal mathematical background
(viz, the non-Hermiticity of the Hamiltonians)
and its phenomenological aspects
(typically, the knowledge of $L$ must be complemented by 
an explicit knowledge of the partitioning of 
Schr\"{o}dinger equation).

In section
\ref{seafour},
all of these aspects of the $L>1$ perturbation theory are
summarized and illustrated by a detailed description
of the characteristic,
not always expected features of the leading-order approximations. 
Several related applicability aspects of our present 
degenerate-perturbation-theory
formalism are finally discussed in section
\ref{seafive} and in two Appendices.
We point out there that
some of the features of the $L>1$ theory 
(e.g., a qualitative, fairly counterintuitive clarification
of the concept of the smallness of perturbations)
may be 
treated as not too different from their $L=1$ predecessors.
At the same time, a wealth of new formal challenges 
is emphasized to emerge, in particular, 
in the analyses of the role of perturbations in the 
specific quantum systems which are required unitary.

\section{Exceptional points\label{seaone}}

\subsection{Bose-Hubbard model and exceptional points
of geometric multiplicity one, $L=1$}

Illustrative ${\cal PT}-$symmetric Bose-Hubbard Hamiltonian operator
(\ref{Ham1}) commutes with the number operator
 \be \label{Num1}
 \widehat{N}=a_1^{\dagger}a_1 + a_2^{\dagger}a_2\,.
 \ee
This means that the number of bosons $N$ is conserved so that after
its choice the Hamiltonian may be represented by a
finite-dimensional non-Hermitian $K$ by $K$ matrix $ H^{(GGKN)}
(K,\gamma,v,c)$ with $K = N+1$ (see its explicit construction in
\cite{Uwe}). Once we fix the units (such that $v=1$) and once we set
$c= 0$ (preserving, for the sake of simplicity, just the first two
components of the Hamiltonian), the resulting one-parametric family
of Hamiltonian matrices $ H^{(GGKN)} (K,\gamma)$ can be assigned the
closed-form energy spectra
 \be
 E_{n}^{(GGKN)}(K,\gamma)
 =(1-\gamma^2)^{1/2}\,(1-K+2n)\,,
 \ \ \ \ \ n=0,1, \ldots,  K-1\,.
 \label{spektrade}
 \ee
These energies remain real and non-degenerate
(i.e., observable)
if and only if $\gamma^2<1$.

In {\it loc. cit.} it has also been proved that the two
interval-boundary values of $\gamma = \pm 1$ are, in the terminology
of the Kato's mathematical monograph \cite{Kato}, exceptional points
(EPs, $\gamma^{(EP)}_+=1$ and $\gamma^{(EP)}_-=-1$). More precisely,
one should speak about the very special EPs of maximal order (i.e.,
of order $K$, abbreviated as EPK). The latter observation may be
given a more general, model-independent linear-algebraic background
via relation
 \be
 H^{(K)}(\gamma^{(EPK)})\, Q^{(EPK)}
 = Q^{(EPK)}\,{J}^{(K)}(\eta)\,%,\ \ \ \ \
 \label{fealt}
 \ee
where
 \be
 \eta=\lim_{\gamma \to \gamma^{(EPK)}}
  E_{n}(K,\gamma)\,,
  \ \ \ \ n = 0, 1, \ldots, K-1
 \label{conflue}
 \ee
is the limiting degenerate energy. Relation (\ref{fealt}) contains
the so called transition matrix $Q^{(EPK)}$ and the non-diagonal,
EPK-related canonical-representation Jordan-block matrix
 \be
 J^{(K)}
 \left (\eta\right )=
  % \left [{Q^{(K)}}
% \right ]^{-1}
% H^{(K)}(g^{(EP)})\,{Q^{(K)}} =\
 \left (
 \begin{array}{ccccc}
 \eta&1&0&\ldots&0\\
 0&\eta&1&\ddots&\vdots\\
 0&0&\eta&\ddots&0\\
 \vdots&\ddots&\ddots&\ddots&1\\
 0&\ldots&0&0&\eta
 \ea
 \right )
 \,.
  \label{JBK}
 \ee
As a certain limiting analogue of the conventional set of eigenvectors
the transition matrix is obtainable via the
solution of the EPK-related analogue (\ref{fealt})
of conventional Schr\"{o}dinger equation.
For our illustrative example
$ H^{(GGKN)} (K,\gamma)$, in particular, all of the
$K-$ and $\gamma^{(EPK)}-$dependent explicit, closed
forms of solutions ${Q^{(EPK)}}$
remain non-numerical and may be found constructed in
dedicated paper~\cite{BHAO}.

\subsection{Generic non-Hermitian degeneracies with geometric multiplicities
larger than one, $L>1$}

In the common model-building scenarios the ${\gamma}-$dependence of
Hamiltonians $H^{(K)}({\gamma})$ is analytic. Under this assumption
the exceptional points of maximal order as discussed in preceding
subsection represent just one of several possible realizations of a
non-Hermitian degeneracy (NHD) with its characteristic EP-related
confluence of eigenvalues (\ref{conflue}). Besides the maximal,
EPK-related complete confluence of eigenvectors as described in
preceding subsection we may encounter, in general, multiple other,
incomplete confluences of eigenvectors
 \be
  \lim_{{\gamma} \to {\gamma}^{(NHD)}}\,
  |{{\psi}}_{m_{k[j]}}({\gamma})\kt = |\chi_j\kt\,,
  \ \ \ \ \ \
 k[j]=1,2,\ldots, M_j\,,\ \ \ \ M_j\geq 2\,,
 \ \ \ \ \ j=1,2,\ldots,L\,.
 \label{Mde}
 \ee
Here we have $M_1+M_2+\ldots + M_L=K$ where $L$ is called the
geometric multiplicity of the EP degeneracy \cite{Kato}
(see also Appendix A for more details). 

Every EP instant ${\gamma}^{(NHD)}$ may be characterized not only by
the overall Hilbert-space dimension $K$ and by the number $L\geq 1$
of linearly independent $\eta-$related states $|\chi_j\kt$ of
Eq.~(\ref{Mde}) but also by a suitably ordered $L-$plet of the
related subspace dimensions $M_j$. Thus, in the present $L\geq 2$
extension of preceding subsection the fully non-diagonal Jordan
block of Eq.~(\ref{JBK}) must be replaced by the more general
block-diagonal canonical representation of the Hamiltonian,
 \be
  {\cal J}^{(K)}(\eta)
 =\left (
 \begin{array}{cccc}
 J^{(M_1)}\left ( \eta\right )&0&\ldots&0\\
 0&J^{(M_2)}\left ( \eta\right )&\ddots&\vdots\\
 \vdots&\ddots&\ddots&0\\
 0&\ldots&0&J^{(M_L)}\left ( \eta\right )
 \ea
 \right )
 = \bigoplus_{j=1}^L\,{ J}^{(M_j)}(\eta)
 \,.
   \label{geneve}
 \ee
In parallel, EPK relation (\ref{fealt}) must be replaced by its
generalization
 \be
 H^{(K)}(\gamma^{(NHD)})\, Q^{(NHD)}
 = Q^{(NHD)}\,{\cal J}^{(K)}(\eta)\,.
 \label{ufealt}
 \ee
Naturally, the direct-sum structure of ${\cal J}^{(K)}(\eta)$
becomes reflected by a partitioned-matrix structure of transition
matrices $Q^{(NHD)}$ which are, in general, not block-diagonal of
course.
%
%
%Only in the user-friendliest cases in which
%the operator $H^{(K)}(\gamma^{(NHD)})$

%\newpage 
%Degenerate exceptional points

\section{Unitary processes of collapse at $L=2$\label{seatwo}}

{\it A priori\,} one may expect that the existence of anisotropy of
the Hilbert space as realized, in Eq.~(\ref{216}) at $L=1$, by an
elementary rescaling $B(\lambda)$ of the basis will also exist at
any larger geometric multiplicity $L> 1$, i.e. in the unitary
quantum systems with the more complicated structure of the NHD
limiting {\it alias\,} quantum phase transition.

\subsection{Quantum physics behind ``degenerate degeneracies'' with $L=2$}

Hypothetically, the unitary evolution of any ${\cal PT}-$symmetric
quantum system moving towards a hiddenly Hermitian EP degeneracy
with geometric multiplicity two can be perceived as generated by a
suitable diagonalizable Hamiltonian $H^{(K)}({\gamma})$ with real
spectrum \cite{ali}. What is only necessary is that its (perhaps,
properly renumbered) eigenvectors $|{{\Phi}}_{j}({\gamma})\kt$ obey
the $L=2$ EP-degeneracy rule
 \be
  \lim_{{\gamma} \to {\gamma}^{(EP)}}|{{\Phi}}_{m}({\gamma})\kt
  = |\chi_a\kt\,,
  \ \ \ \ \ \
 m=0,1,\ldots, M-1\,,
 \label{Mdegewa}
 \ee
 \be
  \lim_{{\gamma} \to {\gamma}^{(EP)}}|{{\Phi}}_{M+n}({\gamma})\kt
  = |\chi_b\kt\,,
  \ \ \ \ \ \
 n=0,1,\ldots, N-1\,.
 \label{uMdegewa}
 \ee
The two limiting eigenvectors $|\chi_a\kt $ and $|\chi_b\kt$ are, by
our assumption, linearly independent so that partitioned matrix
(\ref{geneve}), i.e., at $L=2$ and $\eta=0$, matrix
 \be
 {\cal J}^{(M \bigoplus N)}(0)
 =\left[ \begin {array}{cc} J^{(M)}({0})&0
 \\{}0&J^{(N)}({0})
 \end {array}
 \right]\,
 \label{ursset}
 \ee
will represent the canonical form of our Hamiltonian in the EP
limit. The corresponding transition matrix $Q^{(M \bigoplus N)}$ can
be then obtained by the solution of the limiting version
 \be
 H^{(K)}(\gamma^{(EP)})\, Q^{(M \bigoplus N)}
 = Q^{(M \bigoplus N)}\,{\cal J}^{(M \bigoplus N)}(0)\,
 \label{ealt}
 \ee
of the initial Schr\"{o}dinger equation. A few exactly solvable
samples of the latter $L=2$ degeneracy process ${\gamma} \to
{\gamma}^{(EP)}$ can be found, e.g., in our recent paper~\cite{NPB}.

In all of the similar dynamical scenarios one can always find
parallels with their simpler $L=1$ predecessors. In particular, a
return to the situation before the collapse as sampled, at $L=1$, by
Eq.~(\ref{berefealt}) above, can be also given the following
analogous form
 \be
 \left [Q^{((M \bigoplus N)}\right ]^{-1}\,
 H^{(K)}(\gamma)\, Q^{(M \bigoplus N)}=
 {J}^{(M \bigoplus N)}(0) +  \lambda\,V^{(K)}(\gamma)\,,
 \ \ \ \ \gamma \neq \gamma^{(EP)}
  \label{derefealt}
 \ee
of a ``unitarity-compatible'' perturbation-theoretic
reinterpretation in which the perturbation
$\lambda\,V^{(K)}(\gamma)$ is fully determined by the input matrix
Hamiltonian $H^{(K)}(\gamma)$.

\subsection{Unfoldings of degeneracies under
random perturbations at $L=2$}

In a close parallel to the $L=1$ Schr\"{o}dinger's bound-state
problem (\ref{perL1}) let us now start the study of its $L>1$
generalizations by considering the first nontrivial choice of
degeneracy with the geometric multiplicity $L=2$. In our present
notation the corresponding Schr\"{o}dinger equation then reads
   \be
 \left [
 {\cal J}^{(M \bigoplus N)}(0) + \lambda\,V^{(K)}
 %- E\,I
 \right ]
 \,|{\Psi}\kt=\varepsilon\,|{\Psi}\kt\,.
 \label{epmod}
 \ee
Without any loss of generality we set again $\eta=0$. After such a
choice all of the eigenvalues $\varepsilon=\varepsilon(\lambda)$
will remain small, and they will vanish in the formal
unperturbed-system limit $\lambda \to 0$.

In a way paralleling Eq.~(\ref{ursset}), any given matrix of
perturbations has to be partitioned as well,
 \be
 V_{}^{(K)}=\left[ \begin {array}{cc}
  V^{(M,M)}_{}&V^{(M,N)}_{}
 \\{}V^{(N,M)}_{}&V^{(N,N)}_{}
 \end {array}
 \right]\,.
 \label{pema}
 \ee
With the four submatrices having dimensions indicated by the
superscripts we shall assume, at the beginning at least, that all of
the individual matrix elements of the perturbation matrix $V^{(K)}$
remain bounded at small $\lambda$. This means that the Hamiltonian
is dominated by its unperturbed part  (\ref{ursset}) so that also
the whole perturbed $L=2$ Schr\"{o}dinger equation (\ref{epmod}) has
to be partitioned. In order to simplify the notation we shall write
 \be
 |\Psi\kt=
 \left (
 \ba
 |\psi^{(a)}\pkt\\
 |\psi^{(b)}\pkt
 \ea
 \right )\,,
 \ \ \ \
 |\psi^{(a)}\pkt=
 \left (
 \ba
 \psi_{1}^{(a)}\\
 \psi_{2}^{(a)}\\
 \vdots\\
 \psi_{M}^{(a)}
 \ea
 \right )\,,\ \ \ \
 |\psi^{(b)}\pkt=
 \left (
 \ba
 \psi_{1}^{(b)}\\
 \psi_{2}^{(b)}\\
 \vdots\\
 \psi_{N}^{(b)}
 \ea
 \right )
 \,
 \label{rajt}
 \ee
using the curly-ket symbols for subvectors. This will enable us to
proceed in a partial parallel with the widely studied 
non-degenerate-EP cases where one has
$L=1$ (see Ref.~\cite{admissible,corridors} 
or Appendix B below for a compact review).

 %\newpage

\section{Perturbation theory at $L=2$\label{seathree}}

\subsection{The recent change of the unitary-evolution paradigm}

From the historical point of view the use of EPs in physics has not
been immediate. Only during the last cca 20 years one notices a
perceivable increase of the relevance of the concept in various
branches of theoretical as well as experimental physics. Various
innovative EP applications emerged ranging from the analyses of
resonances in classical mechanics \cite{Mailybaev} and of the so
called non-Hermitian degeneracies in classical optics \cite{Berry}
up to the studies of the wealth of phenomena in quantum physics of
open quantum systems \cite{Nimrod,Nimrodb} or, last but not least, even of
the closed, stable and unitarily evolving quantum systems
\cite{BB,Carl}.

All of these developments contributed to the motivation of our
present study. For the sake of definiteness we restricted our
attention to the framework of quantum physics, unitary or
non-unitary. In this setting the traditional role of the EPs
${\gamma}^{(EP)}$ has always been twofold. Firstly, in the context
of mathematics, the conventional analyticity assumptions about
Hamiltonians
 \be
 H({\gamma})=H({\gamma}_0)+({\gamma}-{\gamma}_0)\,H^{(1)}
 +({\gamma}-{\gamma}_0)^2\,H^{(2)}
 + \ldots\,,
 \ee
and  the conventional power-series ansatz for energies
 \be
  E_n({\gamma})=E_n({\gamma}_0)+({\gamma}-{\gamma}_0)\,E_n^{(1)}
 +({\gamma}-{\gamma}_0)^2\,E_n^{(2)}
 + \ldots\,
 \label{seres}
 \ee
(etc) gave birth to the so called Rayleigh-Schr\"{o}dinger
perturbation-expansion constructions of the Schr\"{o}dinger-equation
solutions. It has been revealed that the radius of convergence $R$
of these perturbation-series solutions is determined by the position
of the nearest EP in the complex plane of the parameter, $R=\min
|{\gamma}_0-{\gamma}^{(EP)}|$ \cite{Kato}.

In the other, direct applications of EPs, the localization of
singularities ${\gamma}^{(EP)}$ only played an important traditional
role in non-unitary, open quantum systems \cite{Nimrod}. An
explanation is easy:  for any self-adjoint Hamiltonian $H({\gamma})$
characterizing a closed quantum system the necessary reality of the
parameter (${\rm Im\,}{\gamma} = 0$) cannot be made compatible with
the fact that {\em all\,} of the values of the eligible (i.e., not
accumulation-point) EP parameters are complex, ${\rm
Im\,}{\gamma}^{(EP)} \neq 0$.

The traditional paradigm has only been changed recently, after
Bender with Boettcher \cite{BB} managed to turn attention of
physicists' community to the existence of a broad class of
Hamiltonians $H({\gamma})$ which happen to be non-Hermitian but
parity-time symmetric (${\cal PT}-$symmetric) in ${\cal K}$. One of
the characteristic mathematical features of these Hamiltonians is
that in spite of their non-Hermiticity, their {\it whole\,} spectrum
$\{E_n\}$ may remain strictly real in a suitable {\em real\,}
interval ${\cal D}$ of the unitarity-compatible parameters
${\gamma}$ (see, e.g., monograph \cite{book} for more details).

\subsection{Rearrangement of Schr\"{o}dinger equation}

With the two subscripts $j_{(a)}$ and $j_{(a)}$ running, in the
curly-ket subvectors in (\ref{rajt}), from $1$ to  $N_{(a)}=M$ and
$N_{(b)}=N$, respectively, we will now only partially fix the norm
by setting $\psi^{(a)}_{1}=\omega_{(a)}$ and
$\psi^{(b)}_{1}=\omega_{(b)}$ or, in a self-explanatory shorthand,
$\psi^{(a,b)}_{1}=\omega_{(a,b)}$. Next, a parallel to the $L=1$
redefinition~(\ref{jtarov}) of wave functions will be found in its
$L=2$ extension
 \be
 |\vec{y}^{(a,b)}\pkt=
 \left (
 \ba
 y^{(a,b)}_{1}\\
 y^{(a,b)}_{2}\\
 \vdots\\
 y^{(a,b)}_{N_{(a,b)} -1}\\
 y^{(a,b)}_{N_{(a,b)} }
 \ea
 \right )=
 \left (
 \ba
 \psi^{(a,b)}_{2}\\
 \psi^{(a,b)}_{3}\\
 \vdots\\
 \psi^{(a,b)}_{N_{(a,b)} }\\
 \Omega_{(a,b)}
 \ea
 \right )\,
 \label{rearr}
  \ee
where the two new, temporarily variable elements $\Omega_{(a)}$ and
$\Omega_{(b)}$ will have to be determined later. In terms of the
four auxiliary symbols
 \be
 |e^{(a,b)}\pkt=\left (
 \ba
 1\\
 0\\
 \vdots\\
 0
 \ea
 \right )
 \,,
 \ \ \ \ \
 \Pi^{(a,b)}=
 \left[ \begin {array}{ccccc} 0&0&0&\ldots&0
 \\{}1&0&0&\ddots&\vdots
 \\{}0&\ddots&\ddots&\ddots&0
 \\{}\vdots&\ddots&1&0&0
 \\{}0&\ldots&0&1&0
  \end {array}
 \right]\,
 %\ \ \ \
 \label{arab}
 \ee
of dimensions $N_{(a)}=M$ and $N_{(b)}=N$ we will further decompose
 $$
 |\psi^{(a,b)}\pkt=|e^{(a,b)}\pkt \!\omega_{(a,b)}
 +\Pi^{(a,b)}\,|\vec{y}^{(a,b)}\pkt\,.
 $$
In the next step we introduce the $L=2$ analogue of the
unpartitioned $L=1$ vector (\ref{postaru}),
 \be
 |r\kt=
 \left (
 \ba
 |r^{(a)}\pkt\\
 |r^{(b)}\pkt
 \ea
 \right )\,,
 \ \ \ \
 |r^{(a,b)}\pkt=
 \left (
 \ba
 r_{1}^{(a,b)}\\
 r_{2}^{(a,b)}\\
 \vdots\\
 r_{N_{(a,b)}}^{(a,b)}
 \ea
 \right )\,
 \ee
with components
 \be
 r_{i}^{(a,b)}=
 \varepsilon\,\omega_{(a,b)}\,\delta_{i,1}
 - \lambda\,
  V^{(N_{(a,b)},M)}_{i,1}\,\omega_{(a)}
 - \lambda\,
  V^{(N_{(a,b)},N)}_{i,1}\,\omega_{(b)}=
 r_{i}^{(a,b)}(\varepsilon,\vec{\omega})\,.
 %\ \ \ \ i = 1,\,2,\,\ldots,\, N_{(a,b)}\,.
  \ee
Treating, temporarily, the two not yet specified quantities
$\Omega_{(a)}$ and $\Omega_{(b)}$ as adjustable
matrix-regularization parameters, and replacing the $L=1$ auxiliary
matrix (\ref{9}) by its partitioned $L=2$ counterpart
 $$
 {\cal A}(\varepsilon)=\left[ \begin {array}{cc}
  A(M,\varepsilon)&0
 \\{}0&A(N,\varepsilon)
 \end {array}
 \right]\,
 $$
we are just left with the problem of finding a suitable $L=2$
analogue of relation (\ref{ustahromu}).

A key to the resolution of the puzzle is found in Eq.~(\ref{arab})
and in its partitioned direct-sum extension
 $$
  {\Pi}^{(M\bigoplus N)}=\left[ \begin {array}{cc} \Pi^{(a)}&0
 \\{}0&\Pi^{(b)}
 \end {array}
 \right]\,.
 $$
Using this symbol we can now rewrite our homogeneous Schr\"{o}dinger
equation (\ref{epmod}) in the inhomogeneous matrix-inversion
representation
 \be
 \left (
 {\cal A}^{-1}(\varepsilon)
  + \lambda\,V^{(K)}\,
  \Pi^{(M\bigoplus N)}
  \right )\,\left (
 \ba
 |\vec{y}^{(a)}\pkt\\
 |\vec{y}^{(b)}\pkt
 \ea
 \right )\,
 =\left (
 \ba
 |r^{(a)}\pkt\\
 |r^{(b)}\pkt
 \ea
 \right )\,
 \label{[34]}
 \ee
or, equivalently,
 \be
 \left (
 I
  + \lambda\,{\cal A}(\varepsilon)\,V^{(K)}\,
  \Pi^{(M\bigoplus N)}
  \right )
 \,
 |\vec{y}\kt
 =
 {\cal A}\,(\varepsilon)|r\kt
  \,.
 \label{[34invr]}
 \ee
Once we drop the redundant superscripts, and once we add the
relevant parameter-dependences in (\ref{[34invr]}) we obtain
relation
 \be
 \left (
 I
  +\lambda\,{\cal A}(\varepsilon)\,V\,
  \Pi
  \right )\,|{\vec{y}}\kt
 =
 {\cal A}(\varepsilon)\,|{r(\lambda,\varepsilon,\vec{\omega})}\kt\,.
 \label{[34b]}
 \ee
This is our ultimate, iteration-friendly exact form of our perturbed
Schr\"{o}dinger equation.

\subsection{Solutions}

Equation (\ref{[34b]}) yields the ket-vector part of the solution in
closed form,
 \be
 |{\vec{y}}\kt
 =
 \left (
 I
  + \lambda\,{\cal A}(\varepsilon)\,V\,
  \Pi
  \right )^{-1}\,
 {\cal A}(\varepsilon)\,|{r(\lambda,\varepsilon,\vec{\omega})}\kt=
 |{\vec{y}}^{(solution)}(\lambda,\varepsilon,\vec{\omega})\kt
 \,.
 \label{[34c]}
 \ee
In the small-perturbation regime the latter formula may be given the
conventional Taylor-series form with
 $$\,
 |{\vec{y}}^{(solution)}(\lambda,\varepsilon,\vec{\omega})\kt=
  {\cal A}(\varepsilon)\,|{r}(\lambda,\varepsilon,\vec{\omega})\kt
 - \lambda\,{\cal A}(\varepsilon)\,V\,
  \Pi\,{\cal A}(\varepsilon)\,|{r(\lambda,\varepsilon,\vec{\omega})}\kt+
  $$
 \be
 +\lambda^2\,
   {\cal A}(\varepsilon)\,V\,
  \Pi\,{\cal A}(\varepsilon)\,V\,
  \Pi\,{\cal A}(\varepsilon)\,|{r(\lambda,\varepsilon,\vec{\omega})}\kt-\ldots
 \,.
 \label{[36]}
 \ee
Naturally, the construction is not yet finished because what is
missing is the guarantee of equivalence between the eigenvalue
problem (\ref{epmod}) and its matrix-inversion reformulation
(\ref{[34]}) containing two redundant parameters. This is the last
obstacle, easily circumvented by our setting, in solution
(\ref{[36]}), both of the redundant upper-case constants
$\Omega_{(a)}=y_{M}^{(a)}$ and $\Omega_{(b)}=y_{N}^{(b)}$ equal to
zero. In the light of explicit formula (\ref{[36]}), this
requirement is equivalent to the pair of relations
 \be
  {{y}}^{(solution)}_M(\lambda,\varepsilon,\omega_{(a)},
  \omega_{(b)})=0\,,
  \ \ \ \ \
  {{y}}^{(solution)}_{M+N}(\lambda,\varepsilon,\omega_{(a)},
  \omega_{(b)})=0
    \,.
 \label{firovni}
 \ee
Both of the left-hand-side functions of the three unknown quantities
$\varepsilon$, $\omega_{(a)}$ and $\omega_{(b)}$ are available, due
to formula (\ref{[36]}), in closed form. Although both of them can
vary with the three independent unknowns (i.e., with
$\varepsilon,\omega_{(a)}$ and $\omega_{(b)}$), one of these
variables merely plays the role of an optional normalization
constant so that we may set, say, $\omega_{(a)}^2+\omega_{(b)}^2=1$.
Thus, the implicit version of the perturbation-expansion
construction of bound states is completed.

%\newpage

\section{Schr\"{o}dinger equation in leading-order approximation\label{seafour}}

Two coupled equations (\ref{firovni}) determine the bound state. In
the spirit of perturbation theory one may expect that the
perturbations happen to be, in some sense, small. At the same time,
even the analysis of the comparatively elementary $L=1$ secular
equation (\ref{krutadef}) determining the single free variable
(viz., the energy) led to the necessity of a strongly
counterintuitive scaling (\ref{216}) reflecting, near the extreme EP
boundary of unitarity, the strong anisotropy of the geometry of the
physical Hilbert space. Naturally, at least comparable complications
have to be expected to be encountered during the analysis of the
more complicated set of two coupled equations (\ref{firovni})
representing the secular equation in its exact $L=2$ version,
constructed as particularly suitable for systematic approximations.

\subsection{Generic case: perturbations without vanishing elements}

Even the most drastic truncation of the formal power series
(\ref{[36]}) yields already a nontrivial ket vector
 \be
 |{\vec{y}}^{(solution)}(\lambda,\varepsilon,\vec{\omega})\kt=
  {\cal
  A}(\varepsilon)\,|{r}(\lambda,\varepsilon,\vec{\omega})\kt\,.
 \label{defthem}
 \ee
Needless to add that what has to vanish are the auxiliary variables
$\Omega_{(a,b)}$ {\it alias} two functions which are available in
closed form. Thus, we have to solve the following two simplified
equations
 \be
 \sum_{k=1}^{N_{(\varrho)}}
 \,A^{(\varrho)}_{N_{(\varrho)},k}(\varepsilon)
 \,r_k^{(\varrho)}(\lambda,\varepsilon,{\omega}_{(a)},{\omega}_{(b)})=0\,,
 \ \ \ \ \ \ \varrho = a,b\,.
 \label{r0firovni}
 \ee
After the insertion of the respective matrix elements
$A^{(a,b)}_{N_{(a,b)},k}(\varepsilon)$ [cf. Eq.~(\ref{9})] we obtain
the pair of relations
 \be
  \left [
 \sum_{m=0}^{M-1} \,\epsilon^{m}\,\lambda\,
 V^{(M,M)}_{M-m,1}
 \right ]
 \omega_{(a)}+
  \left [
 \sum_{m=0}^{M-1} \,\epsilon^{m}\,\lambda\,
 V^{(M,N)}_{M-m,1}
 \right ]
 \omega_{(b)}
  =\epsilon^{M}\,\omega_{(a)}
 \,,
 \label{are1firovni}
  \ee
 \be
  \left [
 \sum_{n=0}^{N-1} \,\epsilon^{n}\,\lambda\,
 V^{(N,M)}_{N-n,1}
 \right ]
 \omega_{(a)}+
  \left [
 \sum_{n=0}^{N-1} \,\epsilon^{n}\,\lambda\,
 V^{(N,N)}_{N-n,1}
 \right ]
 \omega_{(b)}
  =\epsilon^{N}\,\omega_{(b)}
 \,.
 \label{bre1firovni}
  \ee
Obviously, this set can be read as a generalized eigenvalue problem
which determines generalized eigenvectors $\vec{\omega}$ at a
$K-$plet of eigenenergies $\epsilon$ which are all defined as roots
of the corresponding generalized secular determinant.

\subsection{Hierarchy of relevance and reduced approximations}

In the generic case one has to assume that the matrix elements of
the perturbation do not vanish and that $\lambda$ is small, i.e.,
that also the eigenvalues $\varepsilon$ remain small. This enables
one to omit all of the asymptotically subdominant corrections and to
consider just the linear algebraic system
 \be
 \left [
 \lambda\,
 V^{(M,M)}_{M,1}-\epsilon^{M}
 \right ]\,
 \omega_{(a)}+
 \lambda\,
 V^{(M,N)}_{M,1}\,
 \omega_{(b)}
  =0
 \,,
 \label{Aare1firovni}
  \ee
 \be
 \lambda\,
 V^{(N,M)}_{N,1}
 \omega_{(a)}+
  \left [
 \lambda\,
 V^{(N,N)}_{N,1}-\epsilon^{N}
 \right ]
 \omega_{(b)}
  =0
 \,.
 \label{Abre1firovni}
  \ee
The solution of these two simplified coupled linear relations exists
if and only if the determinant of the system vanishes,
 \be
 \det
 \left [
 \begin{array}{cc}
 \lambda\,V^{(M,M)}_{M,1}-\epsilon^{M}& \lambda\,
 V^{(M,N)}_{M,1}\\
 \lambda\,V^{(N,M)}_{N,1}& \lambda\,
 V^{(NN)}_{N,1}-\epsilon^{N}
 \ea
 \right ]=0\,.
 \label{secequ}
 \ee
Thus, an ordinary eigenvalue problem is encountered when $M=N$.

\begin{lemma}\label{lemma1}
In the generic equipartitioned cases with $M=N\geq 3$  the spectrum
ceases to be all real under bounded perturbations. The loss of
unitarity is encountered.
\end{lemma}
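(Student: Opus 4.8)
\smallskip
\noindent\textbf{Proof strategy.}
The plan is to turn the reduced secular equation~(\ref{secequ}) in the equipartitioned regime into an algebraic condition on the single quantity $x=\epsilon^{M}$ and then to show, by an elementary root count, that its solution set can never be purely real once $M=N\geq3$, irrespective of the (bounded) matrix elements of $V$. The loss of reality of the leading-order spectrum then immediately entails the loss of unitarity, since a quasi-Hermitian Hamiltonian generating unitary evolution must possess a real spectrum.

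First I would set $M=N$ in~(\ref{secequ}). Since then $\epsilon^{M}=\epsilon^{N}$, the vanishing of the determinant becomes the quadratic
\be
 x^{2}-\lambda\left(V^{(M,M)}_{M,1}+V^{(N,N)}_{N,1}\right)x
 +\lambda^{2}\left(V^{(M,M)}_{M,1}V^{(N,N)}_{N,1}-V^{(M,N)}_{M,1}V^{(N,M)}_{N,1}\right)=0\,,
 \qquad x=\epsilon^{M}\,,
\ee
i.e. $x$ must be an eigenvalue of the $2\times2$ reduced perturbation matrix $\lambda\,W$ with
\be
 W=\left[\begin{array}{cc}V^{(M,M)}_{M,1}&V^{(M,N)}_{M,1}\\ V^{(N,M)}_{N,1}&V^{(N,N)}_{N,1}\end{array}\right]\,.
\ee
Writing $x_{\pm}=\lambda\,w_{\pm}$ for the two roots ($w_{\pm}$ the eigenvalues of $W$), the $K=M+N=2M$ leading-order energies are, counted with multiplicity, exactly the $M$-th roots of $x_{+}$ together with the $M$-th roots of $x_{-}$.

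The decisive observation is then elementary: a nonzero complex number has \emph{at most two} real $M$-th roots (two only when it is a positive real and $M$ is even, one when it is real and $M$ is odd, none otherwise, in particular none when it is non-real). Hence at most four of the $2M$ leading-order energies can be real; but $M\geq3$ forces $2M\geq6$, so at least two of them are necessarily non-real, and the leading-order spectrum cannot be all real. Two complementary remarks round off the picture. The threshold $M\geq3$ is sharp: for $M=N\leq2$ one has $2M\leq4$, and choosing $W$ with two real positive eigenvalues makes all $M$-th roots of $x_{\pm}$ real, so unitarity need not be lost there. And the only way to evade the count, $x_{+}=x_{-}=0$ (i.e. $\lambda W$ nilpotent, a non-generic situation), is no improvement: there the leading-order operator is itself non-diagonalizable, the exceptional-point degeneracy is simply not resolved at this order, and there is again no unitary reading.

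The hard part will be to transfer this leading-order verdict to the exact eigenvalues $\varepsilon(\lambda)$, i.e. to rule out that the terms discarded on the way from the full series~(\ref{[36]}) to the truncated system~(\ref{Aare1firovni})--(\ref{Abre1firovni}) conspire to restore reality. Here one uses the scaling $\epsilon=O(\lambda^{1/M})$ already dictated by the structure of section~\ref{seathree} (the $L=2$ counterpart of the anisotropic rescaling~(\ref{216})): after rescaling $\epsilon$ by $\lambda^{1/M}$, a non-real leading-order root sits at a fixed nonzero angular distance from the real axis, so its imaginary part is of the same order $|\lambda|^{1/M}$ as the root itself, whereas the neglected corrections are $O(\lambda^{2/M})$ and smaller and cannot cancel it. A standard continuity / implicit-function argument applied to the exact pair of conditions~(\ref{firovni}) then shows that the corresponding exact eigenvalue remains off the real axis for all sufficiently small $\lambda\neq0$, which is precisely the assertion of the lemma.
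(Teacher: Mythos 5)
Your proof is correct and rests on exactly the same idea as the paper's: for $M=N$ the secular determinant~(\ref{secequ}) is a quadratic in $\epsilon^{N}$, and even when its two roots $\epsilon^{N}=\lambda x$ are real, the $N$-valued root extraction $\epsilon=\sqrt[N]{\lambda x}$ distributes the energies on a circle in the complex plane, so that for $N\geq 3$ they cannot all be real. Your additional remarks (the explicit count of at most two real $N$-th roots, the sharpness of the threshold, and the continuity argument for the exact eigenvalues) only elaborate on what the paper leaves implicit.
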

\begin{proof}
Both of the roots $\epsilon^N=\lambda\,x$ of the exactly solvable
quadratic algebraic secular equation (\ref{secequ}) may be
guaranteed to be real in a certain domain of parameters. Still, some
of the energies themselves are necessarily complex since
$\epsilon=\sqrt[N]{\lambda\,x}$ is an $N-$valued function with
values lying on a complex circle.
\end{proof}
In the other, non-equipartitioned dynamical scenarios with, say,
$M>N$, the behavior of the system in an immediate vicinity of its EP
extreme is still determined by the asymptotically dominant part of
the secular equation. After an appropriate modification the above
proof still applies.

\begin{lemma}\label{lemma2}
In the generic case with $M>N \geq 3$ we get, from the dominant part
of the generalized eigenvalue problem~(\ref{secequ}), a subset
($N-$plet) of asymptotically dominant eigenvalues $\varepsilon={\cal
O}({\lambda}^{1/N})$ which cannot be all real.
 \end{lemma}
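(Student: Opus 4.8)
The plan is to follow the proof of Lemma~\ref{lemma1} essentially line for line, the only new step being the extraction of the asymptotically dominant part of the $2\times2$ secular condition (\ref{secequ}) in the unbalanced regime $M>N\geq3$. First I would expand that determinant into the polynomial identity
$$
\varepsilon^{M+N}
-\lambda\,V^{(N,N)}_{N,1}\,\varepsilon^{M}
-\lambda\,V^{(M,M)}_{M,1}\,\varepsilon^{N}
+\lambda^{2}\left(V^{(M,M)}_{M,1}V^{(N,N)}_{N,1}-V^{(M,N)}_{M,1}V^{(N,M)}_{N,1}\right)=0\,,
$$
and then carry out the routine dominant-balance (Newton-polygon) bookkeeping in the two small quantities $\lambda$ and $\varepsilon$. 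Inserting the trial scaling $\varepsilon=\mu\,\lambda^{1/N}$ one checks that the monomials $\varepsilon^{M+N}$ and $\lambda\,V^{(N,N)}_{N,1}\,\varepsilon^{M}$ are both of order $\lambda^{1+M/N}$, so that --- since $M>N$ makes $1+M/N>2$ --- they are asymptotically negligible against the remaining three ${\cal O}(\lambda^{2})$ terms.

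Retaining only those ${\cal O}(\lambda^{2})$ contributions, the secular condition collapses to the single scalar relation $\varepsilon^{N}=\lambda\,x$ with
$$
x\,V^{(M,M)}_{M,1}=\det
\left(\begin{array}{cc}
V^{(M,M)}_{M,1}&V^{(M,N)}_{M,1}\\
V^{(N,M)}_{N,1}&V^{(N,N)}_{N,1}
\end{array}\right)\,,
$$
a number which in the generic case (non-vanishing corner matrix elements, non-singular $2\times2$ block) is different from zero. This is precisely the announced $N$-plet of asymptotically dominant eigenvalues $\varepsilon={\cal O}(\lambda^{1/N})$; the complementary $M$-plet, which to leading order solves $\varepsilon^{M}=\lambda\,V^{(M,M)}_{M,1}$, is not needed here (though it, too, fails reality for the same reason whenever $M\geq3$).

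The reality obstruction is then inherited verbatim from Lemma~\ref{lemma1}: even after the parameters are tuned so that $\lambda\,x$ is real, the solutions $\varepsilon_{k}=(\lambda x)^{1/N}\,e^{2\pi{\rm i}k/N}$, $k=0,1,\dots,N-1$, are $N$ equally spaced points on a circle in the complex plane, at most two of which lie on the real axis when $N\geq3$; hence the $N$-plet cannot be entirely real and the evolution ceases to be unitary. The only point that deserves care --- and where I expect the one piece of genuine work to sit --- is upgrading the formal leading order to an honest ``${\cal O}(\lambda^{1/N})$'' statement: one must verify, by a Puiseux / implicit-function argument exploiting that the roots of $\mu^{N}=x$ are simple, that reinstating the discarded subdominant monomials shifts each of these $N$ roots only by a relative amount ${\cal O}(\lambda^{\delta})$ with $\delta>0$, so that the off-axis members of the $N$-plet remain off the real axis for all sufficiently small $\lambda\neq0$. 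Apart from this, the argument is a straight transcription of the $M=N$ proof, with the exactly solvable quadratic secular equation replaced by its one-monomial dominant reduction.
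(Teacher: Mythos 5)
Your proposal is correct and follows exactly the route the paper intends: the paper offers no detailed proof of Lemma~\ref{lemma2}, only the remark that ``after an appropriate modification the above proof still applies,'' and your dominant-balance extraction of the reduced relation $\epsilon^{N}=\lambda\,\det(\cdot)/V^{(M,M)}_{M,1}$ followed by the $N$-th-root complex-circle argument of Lemma~\ref{lemma1} is precisely that modification, worked out explicitly. Your closing caveat about upgrading the formal leading order via a Puiseux-type argument is a sound extra precaution that the paper does not bother to make.
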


%%\newpage

\subsection{Unitary case: rescaled perturbations}

The loss of unitarity occurring in the $L=1$ models was associated
with the use of the too broad a class of norm-bounded perturbations.
From our preliminary results described in preceding subsection one
can conclude that a similar loss of unitarity may be also expected
to occur, in the generic case, at $L=2$. Indeed, the anisotropy of
the physical Hilbert space which reflects the influence of an
EP-related singularity of the Hamiltonian may be expected to lead
again to a selective enhancement of the weight of certain specific
matrix elements of perturbations $\lambda\,V^{(K)}$.

Once we recall the $L=1$ scenario of subsection \ref{unive} we
immediately imagine that the main source of the apparent
universality of the instability under norm-bounded perturbations
should be sought, paradoxically, in the routine but, in our case,
entirely inadequate norm-boundedness assumption itself. Indeed, in a
way documented by Lemmas \ref{lemma1} and \ref{lemma2}, the loss of
the reality of spectra may directly be attributed to the
conventional and comfortable but entirely random, unfounded and
formal assumption of the uniform boundedness of the matrix elements
of the perturbations.

Most easily the latter result may be illustrated using the
drastically simplified version (\ref{secequ}) of the leading-order
secular equation. Dominant role is played there by the quadruplet of
matrix elements $V^{(P,Q)}_{(P,1)}$ with superscripts $P$ and $Q$
equal to $M$ or $N$. Once they are assumed $\lambda-$independent and
non-vanishing, the leading-order energies read
$\epsilon=\sqrt[N]{\lambda\,x}$. Thus, at any $N \geq 3$ their
$N-$plet forms an equilateral $N-$angle in the complex plane of
$\lambda$.

The latter observation inspires a remedy. In a way eliminating the
$N \geq 3$ complex-circle obstruction one simply has to re-scale the
energies as well as all of the relevant matrix elements of the
perturbation. In this manner the ansatz
 \be
 \epsilon=\epsilon(E)=\sqrt{\lambda}\,E\,
 \ee
opens the possibility of the spectrum being real. Another multiplet
of postulates
 \be
 V^{(M,M)}_{M-m,1}
 =\lambda^{(M-m)/2}\,W^{(M,M)}_{M-m,1}
 \,,
 \ \ \ \
 V^{(M,N)}_{M-m,1}
 =\lambda^{(M-m)/2}\,W^{(M,N)}_{M-m,1}
 \,,
 \ \ \ \ m=0, 1, \ldots, M-1\,,
 \label{ameame}
  \ee
and
 \be
 V^{(N,M)}_{N-n,1}
 =\lambda^{(N-n)/2}\,W^{(N,M)}_{N-n,1}
 \,,
 \ \ \ \
 V^{(N,N)}_{N-n,1}
 =\lambda^{(N-n)/2}\,W^{(N,N)}_{N-n,1}
 \,,
 \ \ \ \ n=0, 1, \ldots, N-1\,
 \label{ameame2}
  \ee
now contains a new partitioned matrix $W=W^{(K)}$ which is assumed
uniformly bounded.

\begin{lemma}
There always exists a non-empty $(2M + 2N)-$dimensional domain
${\cal D}$ of the ``physical'' matrix elements of $W$ for which the
leading-order spectrum is all real and non-degenerate, i.e., in the
language of physics, tractable as stable bound-state energies.
\end{lemma}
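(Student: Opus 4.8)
The plan is to work directly with the rescaled leading-order secular equation. After the substitutions \eqref{ameame}--\eqref{ameame2} the two coupled relations \eqref{Aare1firovni}--\eqref{Abre1firovni} acquire, in the new variable $E$ and with $\epsilon=\sqrt{\lambda}\,E$, a $\lambda-$independent form. Indeed, the dominant terms $\lambda\,V^{(P,Q)}_{P,1}$ become $\lambda\cdot\lambda^{0}\,W^{(P,Q)}_{P,1}=\lambda\,W^{(P,Q)}_{P,1}$ for the top-row elements (those with $M-m=M$, i.e. $m=0$, and $N-n=N$, i.e. $n=0$), while $\epsilon^{M}=\lambda^{M/2}E^{M}$ and $\epsilon^{N}=\lambda^{N/2}E^{N}$. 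The key point is that \emph{all} of the subdominant terms in \eqref{are1firovni}--\eqref{bre1firovni} are, under the rescaling, suppressed by positive powers of $\sqrt{\lambda}$ relative to the leading ones, so that in the limit $\lambda\to 0^{+}$ the secular equation reduces to a genuine \emph{polynomial} eigenvalue problem in $E$ whose coefficients are the bounded numbers $W^{(P,Q)}_{P,1}$, with no residual $N-$th-root multivaluedness: the offending factor $\lambda^{\min(M,N)/2}$ has been absorbed.

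Next I would exhibit one explicit point of ${\cal D}$ and then invoke an openness argument. The cleanest choice is to switch off the off-diagonal coupling, $W^{(M,N)}_{M,1}=W^{(N,M)}_{N,1}=0$, which decouples \eqref{Aare1firovni}--\eqref{Abre1firovni} into two independent scalar equations $W^{(M,M)}_{M,1}=E^{M}$ and $W^{(N,N)}_{N,1}=E^{N}$ (using $\epsilon=\sqrt\lambda E$, so $\epsilon^{M}=\lambda^{M/2}E^M$ against $\lambda\,W^{(M,M)}_{M,1}$; here one takes $M=N$, the equipartitioned case, so the two sides balance and one is left with $E^{N}=W^{(N,N)}_{N,1}$, etc.). Choosing $W^{(M,M)}_{M,1}$ and $W^{(N,N)}_{N,1}$ to be positive reals with distinct $N-$th roots chosen real --- e.g. two distinct positive numbers, each contributing one real $N-$th root --- and, if one wants the \emph{full} $K-$plet real, invoking the further rescaled sub-structure to place all remaining roots on the real axis, one obtains a configuration with real, non-degenerate leading-order spectrum. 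This shows ${\cal D}\neq\emptyset$.

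Finally I would establish that ${\cal D}$ is open in the $(2M+2N)-$real-dimensional space of the relevant matrix elements of $W$. This is the standard consequence of continuous dependence of roots of a polynomial on its coefficients together with the implicit function theorem: at a point where the leading-order secular polynomial in $E$ has only simple real roots, the roots are real-analytic functions of the coefficients in a neighbourhood, and simplicity (hence reality, hence non-degeneracy) persists. Since the map from the $W^{(P,Q)}_{\bullet,1}$ to the coefficients of the secular polynomial is itself polynomial (hence continuous), the preimage of the open set ``all roots simple and real'' is open, and non-empty by the previous step; this is the claimed domain ${\cal D}$.

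\medskip

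The main obstacle I anticipate is \emph{not} the openness argument, which is routine, but rather being careful about what ``leading-order spectrum'' means when $M=N$ versus $M>N$, and making sure the rescaling \eqref{ameame}--\eqref{ameame2} genuinely removes the multivaluedness \emph{uniformly} --- i.e. checking that after the substitution every term of \eqref{are1firovni}--\eqref{bre1firovni} carries a power of $\lambda$ that is an \emph{integer} multiple of some common unit, so that the $\lambda\to 0$ reduction is clean and the surviving polynomial in $E$ has degree exactly $\max(M,N)$ with the expected coefficient structure; only then is ``all roots real and simple'' a codimension-zero (open) condition rather than something one has to force. Dealing honestly with the non-equipartitioned case $M\neq N$ --- where the two rescaled equations no longer balance term-by-term and one must identify which subset of the $K$ roots is actually governed by the bounded data --- is where the bookkeeping is least pleasant.
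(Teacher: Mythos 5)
There is a genuine gap, and it sits exactly at the step you flag as the ``key point.'' You claim that after the substitutions (\ref{ameame})--(\ref{ameame2}) the subdominant terms of (\ref{are1firovni})--(\ref{bre1firovni}) are suppressed by positive powers of $\sqrt{\lambda}$, so that only the four elements $W^{(P,Q)}_{P,1}$ survive and the leading-order problem is still the reduced $2\times 2$ determinant (\ref{secequ}). The opposite is true, and it is the whole point of the rescaling: the $m$-th term of the first sum is $\epsilon^{m}\,\lambda V^{(M,M)}_{M-m,1}=\lambda^{m/2}E^{m}\cdot\lambda^{(M-m)/2}W^{(M,M)}_{M-m,1}=\lambda^{M/2}E^{m}W^{(M,M)}_{M-m,1}$, i.e.\ \emph{every} term $m=0,\dots,M-1$ carries the same power $\lambda^{M/2}$ as the right-hand side $\epsilon^{M}=\lambda^{M/2}E^{M}$ (and likewise $\lambda^{N/2}$ throughout the second equation). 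Dividing each equation by its common power of $\lambda$ leaves the \emph{full} generalized eigenvalue problem with all $2M+2N$ parameters $W^{(P,Q)}_{j,1}$ entering the secular polynomial of degree $M+N$ at leading order. If your reading were correct, the rescaling would change nothing: your proposed witness point (decoupled, $W^{(M,N)}_{M,1}=W^{(N,M)}_{N,1}=0$, the rest positive) gives $E^{M}=W^{(M,M)}_{M,1}$ and $E^{N}=W^{(N,N)}_{N,1}$, whose roots for $M,N\geq 3$ lie on circles in the complex plane with only one real member each --- precisely the obstruction of Lemmas \ref{lemma1} and \ref{lemma2}. The phrase ``invoking the further rescaled sub-structure to place all remaining roots on the real axis'' is exactly the missing argument, not a remark one can defer.

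The repair is short and is essentially the paper's proof: since all $2M+2N$ elements survive at leading order, the secular determinant is $\bigl(E^{M}-P_{a}(E)\bigr)\bigl(E^{N}-P_{d}(E)\bigr)-P_{b}(E)P_{c}(E)$ with $P_{a},P_{b}$ arbitrary polynomials of degree $\leq M-1$ and $P_{c},P_{d}$ of degree $\leq N-1$ built from the $W$'s. Setting $P_{b}=P_{c}=0$ and choosing $P_{a},P_{d}$ so that the two monic factors have disjoint sets of simple real roots gives a genuine point of ${\cal D}$; your openness argument (continuity of roots, persistence of simplicity and reality) is fine and then finishes the proof, and also resolves your worry about $M\neq N$, since the two equations are divided by $\lambda^{M/2}$ and $\lambda^{N/2}$ separately.
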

\begin{proof}
The main consequence of the amended, necessary-condition
perturbation-smallness requirements (\ref{ameame}) and
(\ref{ameame2}) is that in our initial, unreduced leading-order
generalized eigenvalue problem (\ref{are1firovni}) +
(\ref{bre1firovni}), all terms in the sums become of the same order
of magnitude. By the scaling we managed to eliminate the explicit
presence of the measure of smallness $\lambda$. In other words the
input information about dynamics is formed now by the re-scaled
perturbation matrix $W$ which offers an $(2M + 2N)-$plet of free
${\cal O}(1)$ parameters. At the same time, the spectral-definition
output is given by the $M+N$ roots of the corresponding secular
equation, i.e., by the roots of an $(M+N)-$th-degree polynomial in
$E$, with all of its separate coefficients bounded and, in general,
non-vanishing. Under these conditions the assertion of the lemma is
obvious.
\end{proof}

%\newpage

\section{Discussion\label{seafive}}

\subsection{Schr\"{o}dinger picture and quasi-Hermitian Hamiltonians}

In the context of the new, Bender- and Boettcher-inspired paradigm
the non-Hermiticity of $H({\gamma})$ in ${\cal K}$ may be considered
compatible with unitarity whenever the spectrum itself is found
real. The explanation of the apparent paradox is easy: Under certain
reasonable mathematical assumptions (cf. \cite{Geyer}) one can find
an non-unitary invertible mapping $\Omega=\Omega({\gamma})$ with
property $\Omega^\dagger \Omega=\Theta \neq I$ which makes our
Hamiltonian self-adjoint,
 \be
 H({\gamma}) \to \mathfrak{h}({\gamma})=
 \Omega({\gamma})\,H({\gamma})\,\Omega({\gamma})
 =\mathfrak{h}^\dagger({\gamma})\,.
 \label{redun}
 \ee
In practice the idea enables one to avoid the use of the
``conventional'' Hamiltonian $\mathfrak{h}({\gamma})$ (self-adjoint,
by construction, in another Hilbert space ${\cal L}$) whenever it
happens to be ``prohibitively complicated''. In the literature such
a type of simplification of calculations is usually attributed to
Dyson \cite{Dyson}. Expectedly, the strategy (also known as
preconditioning) proved efficient as a tool of
construction of bound states in nuclear
physics \cite{Geyer}. In spite of a certain initial doubts
\cite{Jones,Jonesb}, the approach also proved applicable
in the quantum physics of scattering \cite{scatt,scattb}.

From a more abstract theoretical point of view the reference to the
``Dyson's'' isospectral mapping (\ref{redun}) becomes redundant when
we reclassify the space ${\cal K}$ as ``unphysical'', and when we
redefine its inner product yielding another, ``physical'' Hilbert
space $\cal H$,
 \be
 \br \psi_1|\psi_2\kt_{\cal H}=\br \psi_1|\Theta|\psi_2\kt_{\cal K}\,,
 \ \ \ \ |\psi_{1,2}\kt \in {\cal K}
 \,.
 \label{relat}
 \ee
A new, equivalent, ``two-Hilbert-space'' version of the
Schr\"{o}dinger picture is obtained in which the physics described
in the correct Hilbert space $\cal H$ is ``translated'' to its
mathematically easier representation in  $\cal K$. In this sense,
the self-adjointness of hypothetical $\mathfrak{h}$ in hypothetical
${\cal L}$ is found equivalent to the self-adjointness of $H$ in
${\cal H}$, represented by the relation
 \be
 H^\dagger({\gamma})\,\Theta({\gamma})=\Theta({\gamma})\,H({\gamma})\,.
 \label{quasihe}
 \ee
Dieudonn\'e \cite{Dieudonne} and Scholtz et al \cite{Geyer}
suggested to call relation (\ref{quasihe}) the quasi-Hermiticity of
$H$ in ${\cal K}$. In the context of quantum phenomenology a
decisive amendment of the formalism may be seen in the split of the
description of dynamics with information carried by the two
operators $H({\gamma})$ and $\Theta({\gamma})$ defined in ${\cal K}$
in place of one (viz., of $\mathfrak{h}({\gamma})$ living in ${\cal
L}$).

\subsection{Non-Hermitian degeneracies with $L>2$}

After the present decisive clarification of the possibility of the
replacement of the $L=1$ formalism by its $L=2$ generalization, the
next move to the further, $L>2$ dynamical scenarios is now an
almost elementary exercise. Indeed, in sections \ref{sectionfour}
and \ref{sectionfive} it would be sufficient to move from the $L=2$
partitioning of $K={N_{(a)}+ N_{(b)}}$ to its arbitrary $L>2$
analogues (\ref{papas}) and to the related partitioned vector sets
[sampled, e.g., by Eq.~(\ref{Mde})] and matrices [sampled, e.g., by
Eq.~(\ref{geneve})]. The $L=2$ doublets of the eligible superscripts
$^{(a)}$ and $^{(b)}$ marking the curly kets [cf. (\ref{rajt}) or
(\ref{r0firovni}), etc] may be very easily extended to the $L-$plets
with $L>2$, etc.

Along these lines the form of our basic power-series expansion of
the perturbed bound-state kets (\ref{[36]}) remains unchanged. The
related $L=2$ compatibility constraint (\ref{firovni}) must only be
replaced by an $L-$plet of equations
 \be
 y_{N_{(a_{j})}}^{(solution)}(\lambda,\varepsilon,\vec{\omega})=0\,,
 \ \ \ \ \ j=1,2,\ldots,L
 \,
 \label{asoco}
 \ee
where the unknown vector $\vec{\omega}$ has $L$ components.

\subsection{The next-to-leading-order approximation}

In our present paper we did not pay too much attention to the
next-to-leading-order (NLO) approximation where one would have to
set
 \be\,
 |{\vec{y}}^{(solution)}(\lambda,\varepsilon,\vec{\omega})\kt=
  {\cal A}(\varepsilon)\,|{r}(\lambda,\varepsilon,\vec{\omega})\kt
 - \lambda\,{\cal A}(\varepsilon)\,V\,
  \Pi\,{\cal
  A}(\varepsilon)\,|{r(\lambda,\varepsilon,\vec{\omega})}\kt\,.
  \ee
Our main reason was that such a generalization would make the
associated compatibility conditions (\ref{asoco}) perceivably more
complicated. Indeed, in contrast to the leading-order case, an
almost inessential improvement of the insight in the qualitative
features of the quantum system in question would be accompanied, in
the NLO formulae, by the emergence of multiple new matrix elements
of perturbations $\lambda\,V^{(K)}$ including even the terms which
would be quadratic functions of these matrix elements.

This being said, it is necessary to add that even on the pragmatic
and qualitative level, the omission of the NLO corrections only
seems completely harmful in the open-system setting using bounded
matrices $\lambda\,V^{(K)}$ where one does not insist on having the
strictly real spectrum. The point is that whenever one has to
guarantee the unitarity of the system, the class of the admissible
perturbations must be further restricted. In this sense,
unfortunately, an analysis using NLO might prove necessary. Indeed,
the use and precision of the leading-order approximation need not be
sufficiently reliable in general. Even a quick glimpse at the
underlying assumptions (\ref{ameame}) and (\ref{ameame2}) reveals
that the leading-order approximation does not incorporate the
influence of a large subset of the matrix elements of
$\lambda\,V^{(K)}$. At the same time, one has to keep in mind that
in our present approach the dimension of the matrices $K$ has been
assumed finite. For this reason, in the case of doubts, a turn to
the more universal and brute-force numerical methods might prove to
be, in practical calculations, a reasonable alternative to the
rather lengthy and complicated NLO calculations.

\subsection*{Acknowledgements}

This work is
supported by the Faculty of Science of the University of Hradec Kr\'{a}lov\'{e}.
The author also acknowledges the financial support
from the
Excellence project P\v{r}F UHK~2020 Nr. 2212.

\newpage

\section*{\label{Appendix}Appendix A.
An exhaustive classification of the degeneracies of exceptional points}

Up to an arbitrary permutation, every partitioning
 \be
 K=M_1+M_2+\ldots+M_L
 \label{papas}
 \ee
of the full matrix dimension characterizes a different system and
its NHD limit. Thus, we have to postulate, say,
$$M_1 \geq M_2 \geq \ldots \geq M_L \geq 2$$ and introduce the
schematic multi-indices \fbox{$M_1+M_2+\ldots+M_L$} in a way
illustrated in Table \ref{pexp4}.

\begin{table}[h]
\caption{Eligible EP-related partitionings}\label{pexp4}

\vspace{2mm}

\centering
\begin{tabular}{||c||c||}
\hline \hline
   $K$  & list
    \\
 \hline \hline
 2& \fbox{2}\\
 3& \fbox{3}\\
 4& \fbox{4}\ \fbox{2+2} \\
 5&\fbox{5}\ \fbox{3+2} \\
 6  &\fbox{6}\ \fbox{4+2}\
 \fbox{3+3}\
 \fbox{2+2+2} \\
 7&\fbox{7}\
 \fbox{5+2}\
 \fbox{4+3}\
 \fbox{3+2+2}\\
 8&\fbox{8}\
 \fbox{6+2}\
 \fbox{5+3}\
 \fbox{4+4}\
 \fbox{4+2+2}\
 \fbox{3+3+2}\
 \fbox{2+2+2+2}\\
 9&\fbox{9}\
 \fbox{7+2}\
 \fbox{6+3}\
 \fbox{5+4}\
 \fbox{5+2+2}\
 \fbox{4+3+2}\
 \fbox{3+3+3}\
 \fbox{3+2+2+2}\\
  10 &\fbox{10}\
 \fbox{8+2}\
 \fbox{7+3}\
 \fbox{6+4}\
 \fbox{6+2+2}\
 \fbox{5+5}\
  \fbox{5+3+2}\
 \\
   &
 \fbox{4+4+2}\
 \fbox{4+3+3}\
 \fbox{4+2+2+2}\
 \fbox{3+3+2+2}\
 \fbox{2+2+2+2+2}\\
 \vdots&   \dots\\
 \hline
 \hline
\end{tabular}
\end{table}

The Table indicates that the
nontrivial $L>1$ partitionings not containing trivial items $M_j=1$
only exist at $K \geq 4$. The value of the count ${\cal N}(K)$ of
the separate nontrivial, $L\geq 2$ items is seen to exceed one only
at $K=6$. Nevertheless, this count starts growing quickly at the
larger $K$s (see Table~\ref{pexp5}). Thus, the current practice of
studying just the EPK models with $L=1$ misses in fact the huge
majority of the alternative NHD scenarios. Marginally, it is worth
adding that the counts ${\cal N}(K)$ form a well known sequence. In
the open-access on-line encyclopedia of integer sequences
\cite{A083751} it is assigned the code number A083751.

\begin{table}[h]
\caption{The sequence of counts ${\cal N}(K)$ with $K=2,3,\ldots$.}
\label{pexp5}

\vspace{2mm}

\centering
\begin{tabular}{||l||}
 \hline \hline
 0, 0, 1, 1, 3, 3, 6, 7, 11, 13, 20, 23, 33, 40, 54, 65, 87, 104, 136, 164, \\
 209, 252, 319, 382, 477, 573, 707, 846, 1038, 1237, 1506, 1793, \ldots \, .\\
 %2166, \\
 % 2572, 3093, 3659, 4377, 5169, 6152, 7244, 8590, 10086, 11913, \ldots \, .\\
 %13958, 16423, 19195, 22518, 26251, 30700, 35716\\
 \hline
 \hline
\end{tabular}
\end{table}

 \noindent

For our present purposes, the asymptotic growth of sequence ${\cal
N}(K)$ (which is slightly slower than exponential) as well as its
precise mathematical definition are less relevant. At the realistic,
not too large dimensions $K$, for example, we might also use some
alternative definitions. One of them leads to values
 $
{\cal N}(K) %= A000041(K)-A000041(K-1)-1, K>1.
 $
equal to the first differences (diminished by one) of the
special
partitions (of the code number $A000041$, cf. \cite{A000041}),
%
%1, 1, 2, 3, 5, 7, 11, 15, 22, 30, 42, 56, 77, 101, 135, 176, 231,
%297, 385, 490, 627, 792, 1002, 1255, 1575, 1958, 2436, 3010, 3718,
%4565, 5604, 6842, 8349, 10143, 12310, 14883, 17977, 21637, 26015,
%31185, 37338, 44583, 53174, 63261, 75175, 89134, 105558, 124754,
%147273, 173525
or to the first differences of the numbers of trees of diameter four
(see the integer sequence with code number $A000094$ in \cite{A000094}).
%
%also
%
%===========
%
%First differences of .
%
%A000094     N.
%
%11 0, 0, 0, 0, 1, 2, 5, 8, 14, 21, 32, 45, 65, 88, 121, 161, 215,
%280, 367, 471, 607, 771, 980, 1232, 1551, 1933, 2410, 2983, 3690,
%4536, 5574, 6811, 8317, 10110, 12276, 14848,
Nevertheless, irrespectively of the choice of definition let us
point out that in the NHD vicinity, the partitioning multi-indices
will classify the phenomenologically non-equivalent physical systems
in general. As long as the superscripts $^{(K)}$ are in fact
redundant, we will omit or replace them by the more relevant
information about the partitioning, therefore. In particular, symbol
 \be
  {\cal J}^{( M_1\bigoplus M_2\bigoplus \ldots \bigoplus M_L)}(\eta)
  \ee
will represent the general block-diagonal canonical representation $
{\cal J}^{(K)}(\eta)$ of the Hamiltonian defined in
Eq.~(\ref{geneve}).

\newpage

 %\newpage

\section*{Appendix B. Perturbation theory near non-degenerate exceptional points}

\subsection*{B.1. The choice of basis at $L=1$}

Relation (\ref{fealt}) can be read as a definition of transition
matrix responsible for the canonical $K$ by $K$ Jordan-block
representation
 \be
 {J}^{(K)}(\eta)=
 \left [Q^{(EPK)}\right ]^{-1}\,
 H(\gamma^{(EPK)})\, Q^{(EPK)}
 \,
  \label{refealt}
 \ee
of the EPK $L=1$ limit of any given non-Hermitian but ${\cal
PT}-$symmetric Hamiltonian of phenomenological relevance. From this
point of view one can extend the same transformation (i.e., for
matrices with $K < \infty$, a mere choice of the basis in Hilbert
space) to a vicinity of the EPK singularity. This yields the Jordan
block matrix plus perturbation,
 \be
 \left [Q^{(EPK)}\right ]^{-1}\,
 H^{(K)}(\gamma)\, Q^{(EPK)}=
 {J}^{(K)}(\eta) +  \lambda\,V^{(K)}(\gamma)\,.
  \label{berefealt}
 \ee
Such a definition contains a redundant but convenient measure
$\lambda = {\cal O}(\gamma-\gamma^{(EPK)})$ of the smallness of
perturbation.

Due to the conventional postulate of having a specific
one-parametric family of Hamiltonians $H^{(K)}(\gamma)$ given in
advance, the introduction of the concept of perturbation in
(\ref{berefealt}) is just a formal step. Nevertheless, the
interaction term itself could be also reinterpreted as a
model-independent random perturbation which carries the input
dynamical information. From such a perspective every preselected
perturbation term defines a different Hamiltonian
$\widetilde{H^{(K)}})$ which merely coincides with $H^{(K)}(\gamma)$
in the NHD limit $\gamma \to \gamma^{(EP)}$ (a few exactly solvable
samples of such a truly remarkable Hamiltonian matching may be found
in \cite{BHAO}). This means that via relation
 \be
 {J}^{(K)}(\eta) +  \lambda\,V^{(K)}=
 \left [Q^{(EPK)}\right ]^{-1}\,
 \widetilde{H^{(K)}})\, Q^{(EPK)}\,
  \label{uberefealt}
 \ee
[i.e., via a mere reordering of relation (\ref{berefealt})] we
obtain a new picture of physics in which the resulting tilded
Hamiltonian with real spectrum need not be ${\cal PT}-$symmetric at
all.

\subsection*{B.2. The description of the
unfolding of the degeneracy at $L=1$}

A specific constructive extension of the latter observation has been
presented in our recent paper \cite{corridors}. We were able there
to prove that in the NHD dynamical regime, the mere boundedness of
the norm of matrix $V^{(K)}$ together with the smallness of
parameter $\lambda$ still {\em do not \,} guarantee the survival of
the unitarity of the system after perturbation. We showed there (cf.
also \cite{Ruzicka}) that one can guarantee the absence of a
``quantum catastrophe'' (i.e., of an abrupt change of some of the
system's observable features, see \cite{catast,catastb,catastc}) only via a certain
self-consistent revision of the criteria of smallness of matrix
$V^{(K)}$.

Having in mind the parameter-independence and invertibility of the
transformation matrix $Q^{(EPK)}$ the quantification of the
influence of the perturbation is of enormous interest, among others,
in the analysis of stability of the system in question \cite{Viola}.
This was the reason why we also addressed the $L=1$ perturbative
bound-state problem
    \be
 \left [
 {J}^{(K)}(0) + \lambda\,V^{(K)}
  \right ]
 \,|{\Psi}\kt=\epsilon\,|{\Psi}\kt\,,
 %0\,
 %,
% \ \ \ \ \  \varepsilon=E-\eta
%\ n = 0, 1, \ldots,N-1\,
  \label{perL1}
 \ee
with $\eta=0$ in Ref.~\cite{admissible}. With the ket-vector
subscripts $j$ in  $|{\Psi}_j\kt$ running from $1$ to $K$ we fixed
the norm (by setting $|{\Psi}_1\kt=1$), and we relocated the first
column of Eq.~(\ref{perL1}), viz., vector
 \be
 \vec{r} = \vec{r}(\lambda) = \left (
 \ba
 \epsilon - \lambda\,{V}_{1,1}\\
 - \lambda\,{V}_{2,1}\\
 \vdots\\
 - \lambda\,{V}_{K,1}
 \ea
 \right )\,
 \label{postaru}
 \ee
to the right-hand side of the equation. Then we restored the
comfortable square-matrix form of the equation via its two further
equivalent modifications. Firstly we added a new, temporarily
undetermined auxiliary component $\Omega_K$ to an ``upgrade'' of the
wave function
 \be
 \vec{y} =\left (
 \ba
 y_1\\
 y_2\\
 \vdots\\
 y_{K-1}\\
 y_{K}
 \ea
 \right )= \left (
 \ba
 |{\Psi}_2\kt\\
 |{\Psi}_3\kt\\
 \vdots\\
 |{\Psi}_{K}\kt\\
 \Omega_{K}
 \ea
 \right )\,.
 \label{jtarov}
 \ee
Subsequently, an introduction of the following auxiliary
lower-triangular $K$ by $K$ matrix
  \be
 A=A(K,\epsilon)=\left (
 \begin{array}{ccccc}
 1&0&0&\ldots&0\\
 \epsilon&1&0&\ddots&\vdots\\
 \epsilon^2&\epsilon&\ddots&\ddots&0\\
 \vdots&\ddots&\ddots&1&0\\
 \epsilon^{K-1}&\ldots&\epsilon^2&\epsilon&1
 \ea
 \right )\,
 \label{9}
 \ee
and of its two-diagonal inverse
 \be
 A^{-1}
 =\left (
 \begin{array}{ccccc}
 1&0&0&\ldots&0\\
 -\epsilon&1&0&\ddots&\vdots\\
 0&-\epsilon&\ddots&\ddots&0\\
 \vdots&\ddots&\ddots&1&0\\
 0&\ldots&0&-\epsilon&1
 \ea
 \right )\,
 \label{9inv}
  \ee
accompanied by a parallel formal upgrade of the interaction matrix,
 \be
 V^{(K)} \ \to \ Z
 =\left (
 \begin{array}{ccccc}
 {V}_{1,2}&{V}_{1,3}&\ldots&{V}_{1,K}&0\\
 {V}_{2,2}&{V}_{2,3}&\ldots&{V}_{2,K}&0\\
 \ldots&\ldots&\ldots&\vdots&\vdots\\
 {V}_{K,2}&{V}_{K,3}&\ldots&{V}_{K,K}&0
 \ea
 \right )\,
 \label{ustahromu}
 \ee
enabled us to rewrite our initial homogeneous Schr\"{o}dinger
Eq.~(\ref{perL1}), in the last step of the construction of its
solution, in an equivalent matrix-inversion form
 \be
 (A^{-1} + \lambda\,Z) \vec{y}= \vec{r}\,
 \label{tarov}
 \ee
or, better,
 \be
 (I + \lambda\,A\,Z) \vec{y}= A\,\vec{r}\,
 \label{satarov}
 \ee
accompanied by the innocent-looking but important self-consistence
constraint
 \be
 \Omega_K=0\,.
 \label{compat}
 \ee
In the leading-order application of the recipe we then returned to
the slightly vague assumption of the ``sufficient smallness'' of the
perturbation. On these grounds we recalled the formal Taylor-series
expansion of the resolvent which yielded the closed formula
 \be
 \vec{y}^{(solution)}(\epsilon)= A\,\vec{r}-
 \lambda \,A\,\,Z\,
 A\,\vec{r}+\lambda^2 \,A\,\,Z\,A\,Z\,
 A\,\vec{r}
 -\ldots\,
 \label{tadef}
 \ee
for the modified wave function. It contained a free parameter
$\epsilon$ which had to be fixed via the supplementary secular
equation (\ref{compat}). In the light of the Taylor-series formula
(\ref{tadef}), such a secular equation now acquires the
$K-$th-vector-component form
 \be
 {y_K}^{(solution)}(\epsilon)=0\,.
 \label{krutadef}
 \ee
of an explicit transcendental equation for the energies $\epsilon$.

\subsection*{B.3. Unitary-evolution process of unfolding at $L=1$\label{unive}}

The latter constraint (\ref{compat}) plays the role of an implicit
definition of the spectrum. The $K-$plet of roots $\epsilon_n
=\epsilon_n(\lambda)$, $n=1,2,\ldots,K$ represents the bound-state
energies. After the truncation of the series, just approximate
solutions are being obtained. In Ref.~\cite{admissible},
incidentally, even the leading-order roots were found complex in
general.

This observation was interpreted as indicating that in an immediate
EPK vicinity the norm-bounded perturbations $\lambda\,{V}$ should
still be considered, in the unitary theory, ``inadmissibly large''.
The non-unitary, open-quantum system interpretation of the
perturbations proved needed forcing the system to perform, at an
arbitrarily small but non-vanishing $\lambda\neq 0$, an abrupt
quantum phase transition.

Incidentally, qualitatively the same conclusions were also obtained
in the above-mentioned more concrete study \cite{Uwe} of the
specific Bose-Hubbard model in its EPK dynamical regime. The
resolution of an apparent universal-instability paradox was provided
in our subsequent study \cite{corridors} in which we studied the
underlying exact as well as approximate secular equations in more
detail. Our ultimate conclusion was that the necessary smallness
condition specifying the class of the admissible, unitarity
non-violating perturbations does not involve their upper-triangular
matrix part at all. In contrast, their lower-triangular matrix part
must be given the following, matrix-element-dependent form
 \be
 \lambda\,V^{(K)}_{(admissible)}=\left[ \begin {array}{cccccc}
  {\lambda}^{1/2}{{\mu}}_{11}
  &0&\ldots&0&0&0
  \\\noalign{\medskip}\lambda\,{\mu}_{{21}}&{\lambda}^{1/2}{{\mu}}_{22}
&\ldots&0&0&0
  \\\noalign{\medskip}{\lambda}^{3/2}
  \,{\mu}_{{31}}&\lambda\,{\mu}_{{32}}&\ddots&\vdots&\vdots&0
  \\\noalign{\medskip}{\lambda}^{2}{\mu}_{{41}}&{\lambda}^{3/2}
  \,{\mu}_{{42}}
  &\ddots&\ddots&0&0
  \\\noalign{\medskip}\vdots&\vdots&\ddots&\lambda\,{\mu}_{{{K}-1{K}-2}}&
  {\lambda}^{1/2}{{\mu}}_{{K}-1{K}-1}
 &0
  \\\noalign{\medskip}{\lambda}^{{K}/2}{{\mu}}_{{K}1}&
 {\lambda}^{({K}-1)/2}{\mu}_{{{K}2}}&\ldots&\lambda^{3/2}
 \,{\mu}_{{{K}{K}-2}}&\lambda\,{\mu}_{{{K}{K}-1}}&{\lambda}^{1/2}{{\mu}}_{{K}{K}}
 \end {array} \right]\,.
 \label{uho}
 \ee
During the decrease of $\lambda\to 0$, all of the variable
lower-triangle matrix-element parameters must remain bounded,
$\mu_{j,k}={\cal O}(1)$. In other words, as long as we are working
in a specific, fixed ``unperturbed'' basis, the matrix structure
(\ref{uho}) may be interpreted as manifesting a characteristic
anisotropy and the hierarchically ordered weights of influence of
the separate matrix elements. Indeed, we may rescale
 \be
 \lambda\,V^{(K)}_{(admissible)}=\lambda^{1/2}\,B(\lambda)\,
 V^{(reduced)}\,B^{-1}(\lambda)
 \label{216}
 \ee
where $B(\lambda)$ would be a diagonal matrix with elements
$B_{jj}(\lambda)=\lambda^{j/2}$ and where the reduced perturbation
matrix would be bounded, $V^{(reduced)}_{jk}={\cal O}(1)$.

On this necessary-condition background valid at all dimensions $K$,
the samples of sufficient conditions retain a purely numerical
trial-and-error character, with the small$-K$ non-numerical
exceptions discussed, in \cite{corridors}, for the matrix dimensions
up to $K=5$.

\newpage

\end{document}